\documentclass[reqno]{amsart}
\usepackage[english]{babel}
\usepackage{amssymb}
\usepackage[T1]{fontenc}
\usepackage{graphicx,float}
\usepackage{subcaption}
\usepackage{mwe}
\usepackage{xcolor,hyperref,verbatim}
\newtheorem{theorem}{Theorem}[section]

\newtheorem{lemma}[theorem]{Lemma}

\theoremstyle{definition}

\newtheorem{example}[theorem]{Example}
\theoremstyle{remark}
\newtheorem*{remark}{Remark}
\numberwithin{equation}{section}

\newcommand{\uu}{\pmb{u}}
\newcommand{\ff}{\pmb{f}}

\newcommand{\nn}{\pmb{\hat{n}}}
\newcommand{\Id}{\pmb{1}}

\newcommand{\CC}{\mathbb{C}}
\newcommand{\DD}{\mathbb{D}}
\newcommand{\pdzb}{\partial_{\overline{z}}}

\newcommand{\GG}{\pmb{g}}

\newcommand{\zz}{\pmb{\hat{z}}}
\newcommand{\zb}{\overline{z}}\newcommand{\wb}{\overline{w}}

\newcommand{\A}{\mathbb{A}_r}

\newcommand{\MM}{\pmb{M}}
\begin{document}
\title{Explicit Solutions in Isotropic Planar Elastostatics}

\author{Andreas Granath}
\address{Department of Mathematics and Mathematical Statistics\\ Ume\aa \ University\\ SE-901 87 Ume\aa \\ Sweden}
\email{andreas.granath@umu.se}
\author{Per \AA{hag}}
\address{Department of Mathematics and Mathematical Statistics\\ Ume\aa \ University\\ SE-901 87 Ume\aa \\ Sweden}
\email{per.ahag@umu.se}
\author{Antti Per\"{a}l\"{a}}
\address{Department of Mathematics and Mathematical Statistics\\ Ume\aa \ University\\ SE-901 87 Ume\aa \\ Sweden}
\email{antti.perala@umu.se}

\author{Rafa\l\ Czy{\.z}}
\address{Institute of Mathematics \\ Faculty of Mathematics and Computer Science \\ Jagiellonian University\\ \L ojasiewicza 6\\ 30-348 Krak\'ow\\ Poland}
\email{Rafal.Czyz@im.uj.edu.pl}

\keywords{Cardioid, complex variables, conformal mapping, eccentric annulus, epitrochoid, isotropic planar elastostatics, notch problem, plane elasticity}
\subjclass[2020]{Primary 74B05; Secondary 30E25, 30C35, 31A10.}

\begin{abstract}
Addressing the intricate challenges in plane elasticity, especially with non-vanishing traction and complex geometries, requires innovative methods. This paper offers a novel approach, drawing inspiration from the Neumann problem for the inhomogeneous Cauchy-Riemann equations. Our method applies to domains conformally equivalent to a unit disk or an annulus, focusing on deriving explicit solutions for the displacement field rather than the stress tensor, which distinguishes it from most traditional approaches. We explore solutions for specific classical cases to demonstrate its efficacy, such as a cardioid domain, a ring domain with a shifted hole, and a gear-like structure. This work enhances the toolkit for researchers and practitioners tackling isotropic planar elastostatic challenges with a unified and flexible approach.
\end{abstract}

\maketitle

\section{Introduction}

The study of plane elasticity in a domain $\Omega$ has a vibrant history in the engineering and physical sciences. Here, $\Omega$ represents a material that is symmetric, isotropic, and linear in its elastic behavior,  traditionally captured by two equations:
\begin{align}
-\nabla\cdot\sigma(\uu)&=\ff\text{ in }\Omega,\label{eq:elasticityproblem} \\
\sigma(\uu)\cdot\nn&=\GG\text{ on }\partial\Omega.
\nonumber
\end{align}
In these equations, $\sigma(\uu)$ is the stress tensor, $\ff$ denotes an applied force, and $\nn$ is the outward pointing unit normal. The stress tensor $\sigma(\uu)$ is then described by $\sigma(\uu)=2\mu\varepsilon(\uu)+\lambda(\nabla\cdot\uu)\Id$ with $\Id$ being the $2\times 2$ identity matrix, and $\varepsilon(\uu)$ is the strain tensor given as the symmetric part of the Jacobian of $\uu$. Furthermore, $\mu, \lambda$ are the associated Lamé parameters, and the force $\ff$ can be expressed as $\ff=\rho\pmb{F}$, where $\pmb{F}$ is a specific force and $\rho$ is the material density. Finally, $\GG$ represents the given boundary traction on $\partial\Omega$.

These equations are typically addressed using numerical solutions like the finite element or boundary element methods. However, complexities in geometric properties often pose substantial difficulties. Hence, deriving explicit solutions has been an area of considerable focus.

The use of complex functions in plane elastic problems was initiated by Kolosov in 1909 \cite{Kolosov1}, building on Kirsch’s method, which addressed solutions around a circular hole \cite{Kirsch}. Several authors have extended this work, most famously Muskhelishvili \cite{Mushk} for more general domains and Westergaard~\cite{Westergaard} for crack problems. Both of these methods build on representing the stress tensor in terms of Airy potential functions, allowing the stresses to be solved. It is then possible, in some instances, to solve for the displacement via these stresses. It was also shown by Yun \cite{Yun} in the 1990s that it is possible to obtain potential-based solutions to the elasticity problem starting with the Navier equation.

However, these approaches do come with their drawbacks. In this work, we demonstrate a novel method for solving a broad class of problems in plane elasticity. In particular, we establish a  correspondence between the well-studied Neumann problem for the inhomogeneous Cauchy-Riemann equations and the problem of linear elasticity in bounded plane domains that are conformally equivalent to either the disk or an annulus. Although reminiscent of other complex methods, such as the ones in \cite{Constanda,Kolosov1,Mushk,Neuber}, our method yields a more direct approach as well as an explicit formula for the displacement field rather than the stress tensor.

An additional area for improvement that has emerged with this method pertains to the challenges that arise when considering ring-shaped domains. These structures are common in many engineering constructions, e.g., gears and ring bearings. To study deformation and internal stress distribution in these, one often uses bipolar coordinates, e.g., as in Chen et al. \cite{Chen1,Chen2} or Alaci, Ciornei, and Romanu \cite{Alaci}. We shall also consider a simple notch problem that has historically been challenging when using Airy potentials  due to  singularities at the corner \cite{Helsing,Williams}.

Motivated by these challenges, this paper outlines a new method for obtaining integral representations of the solution to \eqref{eq:elasticityproblem} in domains that are images of conformal maps from either a unit disk or an annulus. We accomplish this by leveraging the established knowledge of the Neumann problem for the inhomogeneous Cauchy-Riemann equations and the known representations of its solutions due to Begehr and Vaitekhovich \cite{Begehr,Vaitekhovich}.

Our main results, encapsulated in Theorem~\ref{theorem:SimplyConnectedMain} for simply connected domains $\Omega$ that are conformally equivalent to the unit disk $\DD$ and Theorem~\ref{theorem: SolutionAnnularDomain} for domains $\Omega$ that are conformally equivalent to the annulus
$\A=\{ z\in\CC : r<|z|< 1 \}$, provide advancements in solving plane elasticity problems. In Section~\ref{sec:Appl}, we further illustrate the applicability of our method by demonstrating solutions to a notch problem in a cardioid domain, a ring domain with a shifted hole, and a gear-like domain, underscoring the wide-ranging potential of our approach.

When analyzing the method presented in this paper, it is crucial to contextualize it within the broader landscape of existing techniques. A natural point of comparison is the classical method, commonly referred to as Kolosov's method or the classical complex method \cite{Mushk}. This method boasts the ability to explicitly determine stress values across diverse domains. A disadvantage of Kolosov's method is that it requires the conformal map to be rational. Although attempts have been made to dispense with this necessity by seeking approximations through rational functions, these endeavors have proven cumbersome and not particularly fruitful, as Neuber \cite{Neuber} pointed out. Moreover, the classical method often necessitates deriving the primary ingredients for individual problems since a comprehensive framework for general force classes or domains is absent. This contrasts with our method, which sets conditions on general classes of both loads and domains.  Finally, when compared to the real integral equation method \cite{Constanda} and the Somigliana method \cite{Rizzo}, a distinct advantage of our approach becomes evident as it sidesteps the need to introduce intermediary entities like integral operators or layer potentials.

\section{Notation and preliminaries}

In this section, we introduce several essential concepts and notation used throughout the paper. Our focus is on the elastostatic equation in a bounded planar domain $\Omega$ for a specific material with associated Lamé parameters $\mu,\lambda$, which describe the elastic properties of the material. The problem is typically presented in its divergence form~\eqref{eq:elasticityproblem}. The displacement field $\uu$ describes the deformation of the domain, $\ff$ denotes a body force acting on the domain, and $\GG$ represents the boundary traction on $\partial\Omega$. The force $\ff$ can also be expressed in terms of a specific force $\pmb{F}$ as $\ff=\rho\pmb{F}$. The representation of the force will turn out to be important, hence we shall consider the standard representation $\pmb{F}=\nabla\phi+\nabla\times\pmb{A}$.

If $\Omega$ has multiple boundaries, $\GG$ is a piecewise-defined function. The stress tensor $\sigma(\uu)$ is a matrix derived from the displacement field $\uu$, defined as $\sigma(\uu)=2\mu\varepsilon(\uu)+\lambda(\nabla\cdot\uu)\Id$ with $\Id$ being the $2\times 2$  identity matrix, and $\varepsilon(\uu)$ representing the strain tensor. The strain tensor is the symmetric part of the Jacobian of $\uu$, that is $\varepsilon(\uu)=\frac{\nabla\uu+(\nabla\uu)^T}{2}$. For an in-depth introduction to this problem and its generalizations, we refer the reader to Chapter 7 in \cite{Lurie}.

The divergence form, although concise, can be challenging to work with. Thus, it is common to rewrite~\eqref{eq:elasticityproblem}  into an explicit second-order form, known as the Navier equation or the Navier-Cauchy equation:
\begin{equation}
(\mu+\lambda)\nabla(\nabla\cdot\uu)+\mu\nabla^2\uu+\rho\pmb{F}=0.
\label{eq:Navier}
\end{equation}
Here, $\nabla^2$ denotes the Laplacian, and the body force $\ff$ has been expressed in terms of a specific force $\pmb{F}$ using the material density $\rho$. This conversion is carried out in the proof of Theorem~\ref{theorem:EquivalenceSimplyConnected} and will be needed to link the above problem with the Neumann problem for the inhomogeneous Cauchy-Riemann equations.

It is worth noting the relationship between \eqref{eq:Navier} and the dynamical problem. The equation~\eqref{eq:Navier} can be regarded as the steady-state equation for elastodynamics. Adding the term $\rho\frac{\partial^2\uu}{\partial t^2}$ to the right-hand side yields the standard equation describing the dynamical behavior of elastic solids. Therefore, if the equation is separable, the results in this paper may facilitate the construction of dynamical solutions.

Next, we recall the notation for the classical Wirtinger derivatives:
$$\pdzb = (\partial_x + i\partial_y)/2 \quad \text{ and } \quad \partial_z = (\partial_x - i\partial_y)/2.$$
For a planar set $E$, we denote by $\mathcal{C}(E;\CC)$ the space of continuous complex-valued functions defined on $E$. For a real number $\alpha$ such that $0 < \alpha < 1$, the notation $\mathcal{C}^\alpha(E;\CC)$ denotes the space of $\alpha$-Hölder continuous functions. These are functions $\psi$ that satisfy
$$|\psi(z)-\psi(\zeta)|\leq C|z-\zeta|^\alpha,$$
for some $C>0$. For positive integer values of $k$, $\mathcal{C}^k(E;\CC)$ denotes functions whose partial derivatives up to order $k$ are continuous in $E$, while $\mathcal{C}^k(\overline{E},\CC)$ denotes those that are $\mathcal{C}^k$ in a neighborhood of $E$. Lastly, the notation $\mathcal{C}^{1,\alpha}(E;\CC)$ denotes functions whose first partial derivatives belong to the space $\mathcal{C}^{\alpha}(E;\CC)$.

\section{The case of simply connected domains}

In this section, we demonstrate that the plane elasticity problem, as shown in \eqref{eq:elasticityproblem}, in a simply connected domain can be understood as an equivalent Neumann problem for the inhomogeneous Cauchy-Riemann equations within the same domain, viewed in $\CC$. First, we need to recall the following result.

\begin{theorem}\label{thm:Begehr}
Let $\psi\in\mathcal{C}^\alpha(\overline{\DD};\CC)$, $0<\alpha<1$, $\gamma\in \mathcal{C}(\partial\DD;\CC)$, $c\in\CC$. The Neumann problem for the inhomogeneous Cauchy-Riemann equations in the unit disk:
\begin{align*}
    \pdzb w&=\psi\text{ in }\DD,\\
    \partial_\nu w&=\gamma\text{ on }\partial\DD\\
    w(0)&=c
\end{align*}
is solvable if and only if for $|z|<1$
$$\frac{1}{2\pi i}\int_{|\zeta|=1}\gamma(\zeta)\frac{d\zeta}{(1-\overline{z}\zeta)\zeta}+\frac{1}{2\pi i}\int_{|\zeta|=1}\psi(\zeta)\frac{d\overline{\zeta}}{1-\overline{z}\zeta}+\frac{1}{\pi}\int_{|\zeta|<1}\frac{\overline{z}\psi(\zeta)}{(1-\overline{z}\zeta)^2}d\xi d\eta=0.$$
The unique solution is then given by
$$w(z)=c-\frac{1}{2\pi i}\int_{|\zeta|=1}\big(\gamma(\zeta)-\overline{\zeta}\psi(\zeta)\big)\log(1-z\overline{\zeta})\frac{d\zeta}{\zeta}-\frac{1}{\pi}\int_{|\zeta|<1}\frac{z\psi(\zeta)}{\zeta(\zeta-z)}d\xi d\eta.$$
\end{theorem}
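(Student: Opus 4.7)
The plan is to follow the classical Begehr--Vaitekhovich strategy, based on the Cauchy--Pompeiu representation formula together with a Neumann-type Green kernel on the unit disc. The key algebraic fact to be exploited throughout is the boundary identity $\overline{\zeta}=1/\zeta$ on $\partial\DD$, which lets one pass between $\overline{\zeta}$-factors and holomorphic factors, and the splitting $\partial_\nu = \zeta\partial_\zeta+\overline{\zeta}\,\pdzb$ on $\partial\DD$.

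For \emph{necessity} of the solvability condition, I would start from the Cauchy--Pompeiu formula representing $w(z)$ in terms of its boundary values and the interior datum $\pdzb w$. Using $\partial_\nu = \zeta\partial_\zeta+\overline{\zeta}\,\pdzb$ on $\partial\DD$ together with integration by parts (Stokes' theorem in its complex form) converts Dirichlet-type boundary information into Neumann data. After substituting $\pdzb w=\psi$ and $\partial_\nu w=\gamma$ and using $\overline{\zeta}=1/\zeta$, the resulting identity splits, in its dependence on $\overline{z}$, into a holomorphic-in-$\overline{z}$ part (which recovers $w(z)$ together with the normalization $c$) and an anti-holomorphic obstruction; the latter must vanish for all $|z|<1$, and that obstruction is precisely the displayed solvability identity.

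For \emph{sufficiency}, I would define $w$ by the proposed formula and check the three conditions in turn. Evaluating at $z=0$ is immediate since $\log(1-0)=0$ and the area kernel $z/[\zeta(\zeta-z)]$ vanishes, giving $w(0)=c$. For the interior equation $\pdzb w=\psi$, differentiate under the integral: the kernel $\log(1-z\overline{\zeta})$ is holomorphic in $z$, so the boundary integral contributes nothing, while the area term is a Cauchy transform of $\psi$ whose $\pdzb$-derivative recovers $\psi(z)$ via the standard Cauchy--Green identity. Uniqueness is short: the difference $w$ of two solutions is holomorphic, and $\partial_\nu w=\zeta w'(\zeta)=0$ on $\partial\DD$ forces the holomorphic function $w'$ to vanish on the boundary, hence identically; thus $w$ is constant and $w(0)=0$ gives $w\equiv 0$.

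The main obstacle I expect is the boundary verification $\partial_\nu w=\gamma$ on $\partial\DD$. Although the defining formula is manifestly holomorphic-in-$z$ on the boundary integral side, taking $\partial_\nu$ and passing to the limit $z\to\partial\DD$ generates Cauchy-type singular integrals whose boundary traces must be evaluated using Plemelj--Sokhotski-type jump relations, and the area integral contributes an analogous boundary jump involving $\overline{\zeta}\psi(\zeta)$. Only after invoking the solvability condition do the spurious anti-holomorphic contributions cancel and the remaining pieces collapse to $\gamma(\zeta)$, which is exactly why the condition is both necessary and sufficient. Once this delicate boundary-trace calculation is secured, all other steps are routine.
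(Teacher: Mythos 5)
The paper does not actually prove this statement; its ``proof'' is a one-line citation to Theorem~11 of Begehr's \emph{Boundary value problems in complex analysis I}, so there is no in-paper argument to measure you against. Your outline does follow the general route of that source (Cauchy--Pompeiu representation, the identity $\partial_\nu=\zeta\partial_\zeta+\overline{\zeta}\,\pdzb$ on $\partial\DD$, and the splitting of the resulting identity into a part that reproduces $w$ and an anti-holomorphic obstruction that must vanish). The easy checks you do carry out are fine: $w(0)=c$, $\pdzb w=\psi$ via the Cauchy--Green identity applied to the area term, and uniqueness via the maximum principle applied to the holomorphic function $w'$.

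However, as written the proposal has a genuine gap: the entire content of the theorem --- that $\partial_\nu w=\gamma$ holds on $\partial\DD$ precisely when the displayed solvability condition is satisfied, and that the condition is \emph{necessary} --- is described but never executed. You explicitly defer it (``once this delicate boundary-trace calculation is secured\dots''), yet this is the step where the specific kernels $\log(1-z\overline{\zeta})$, $\frac{1}{(1-\overline z\zeta)\zeta}$ and $\frac{\overline z}{(1-\overline z\zeta)^2}$ must actually emerge; without it one has verified only the trivial parts of the statement. A cleaner way to close the gap than chasing Plemelj--Sokhotski jumps: on $|z|=1$ the Neumann condition reads $z w_z=\gamma-\overline z\psi$ (since $w_{\overline z}=\psi$ there), so after writing $w=\varphi+T\psi$ with $\varphi$ holomorphic and $T$ the Pompeiu operator, the problem reduces to a Schwarz/Dirichlet problem for the holomorphic function $z\varphi'(z)$ with boundary data $\gamma-\overline\zeta\psi-\zeta(T\psi)_\zeta$. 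The solvability of that Dirichlet problem is governed by exactly the stated moment condition, and integrating the Cauchy representation of $z\varphi'$ produces the logarithmic kernel in the solution formula. This reduction explains why the combination $\gamma-\overline\zeta\psi$ appears throughout and is the step your sketch is missing.
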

\begin{proof}
See Theorem 11 in \cite{Begehr}.
\end{proof}

 Next, we turn to the first contribution of this paper.

\begin{theorem}\label{theorem:EquivalenceSimplyConnected}
    Let $\mathbb{D}$ denote the unit disk and $\mu,\lambda$ its associated Lamé parameters. Let $\phi$, $\pmb{A}=A\zz$, where $\phi,A$, are $\mathcal{C}^2$ in a neighborhood of $\mathbb{D}$, be potentials and $\pmb{F}=~\nabla\phi+\nabla\times\pmb{A}$ denote the associated specific loading force. Finally, let the boundary traction $\GG$ be $\mathcal{C}^1$ in a neighborhood of $\partial\mathbb{D}$. Then, the existence of the displacement field $\uu = [u_1,u_2]^T$ that solves
\begin{align}
    -&\nabla\cdot\sigma(\uu)=\rho\pmb{F}\text{ in }\mathbb{D}\nonumber\\
    &\sigma(\uu)\cdot\nn=\GG\text{ on }\partial \mathbb{D}\label{thm_D_Ph}\\
    & \uu(0)=\pmb{0}\nonumber
\end{align}
    is equivalent to the existence of a solution $v = u_1 - i u_2$ to the Neumann problem for the inhomogeneous Cauchy–Riemann equations
    \begin{align}
        \pdzb v&=\frac{\psi}{2}\text{ in }\mathbb{D}\nonumber\\
        \partial_{\nu} v&=\gamma\text{ on }\partial\mathbb{D}\label{thm_D_math}\\ \nonumber
        v(0)&=0,
    \end{align}
    where $\psi=-\frac{\rho}{2\mu+\lambda}\phi-i\frac{\rho}{\mu}A$, $\gamma=\frac{1}{2\mu}(g_1-ig_2)+\frac{\rho}{2\mu}\overline{\zeta}\big(\nu\phi-iA\big)$, and $\nu=\frac{\lambda}{\lambda+2\mu}$ is the Poisson ratio.

\end{theorem}

\begin{proof} We begin by assuming that  \eqref{thm_D_Ph} holds. Let $(x_0,y_0)\in\DD$ and $\DD_\epsilon=\DD_\epsilon(x_0,y_0)$ be a disk of radius $\epsilon$ such that $\overline{\DD}_\epsilon\subset\DD$. Assuming planarity, i.e., $\uu=\uu(x,y)$ and $\uu\cdot\zz=0$, together with the standard Laplacian identity $\nabla^2\uu=\nabla(\nabla\cdot\uu)-\nabla\times(\nabla\times\uu)$ allows us to rewrite and integrate \eqref{eq:Navier} to obtain the equality
\begin{align}
    \int_{\DD_\epsilon}[(2\mu+\lambda)\nabla(\nabla\cdot\uu)-\mu\nabla\times(\nabla\times\uu)+\rho\pmb{F}]\, dA=\pmb{0}\qquad \text{ in }\DD_\epsilon.
    \label{eq:associationform}
\end{align}
Next, we associate the displacement field $\uu$ with the potentials in $\pmb{F}$. We recall the following integral identities,
\begin{equation*}
    \int_\DD\nabla f dA=\int_{\partial\DD}f\nn \, ds\quad\text{ and }\quad \int_\DD\nabla\times(g\zz)dA=-\int_{\partial\DD} g\zz\times\nn \, ds ,
\end{equation*}
which allows us to rewrite \eqref{eq:associationform} as a line integral along the closed boundary of $\DD_\epsilon$ using that $\pmb{F}=\nabla\phi+\nabla\times(A\zz)$, resulting in
\begin{equation}
    \int_{\partial\DD_\epsilon} \left((2\mu+\lambda)(\nabla\cdot\uu)\nn+\mu(\nabla\times\uu)\times\nn+\rho\phi\nn-\rho A\zz\times\nn\right)\, ds=\pmb{0}.
\end{equation}
We note that the integrand can be expressed as
\begin{equation*}
    \int_{\partial\DD_\epsilon}\begin{bmatrix}
        (2\mu+\lambda)(\nabla\cdot\uu)+\rho\phi & -\mu(\nabla\times\uu)\cdot\zz+\rho A \\
        \mu(\nabla\times\uu)\cdot\zz-\rho A & (2\mu+\lambda)(\nabla\cdot\uu)+\rho\phi
    \end{bmatrix}\begin{bmatrix}
        n_1 \\ n_2
    \end{bmatrix}ds=\pmb{0}.
\end{equation*}
This now implies two possibilities: either the terms in the integrand vanish, and we have the desired associations, or the matrix must correspond to a divergence-free vector field. Assume the latter case, let $h_1=(2\mu+\lambda)\nabla\cdot\uu+\rho\phi$ and $h_2=\mu(\nabla\times\uu)\cdot\zz-\rho A$. Then the divergence theorem gives
\begin{align}
    \pmb{0} = \int_{\partial \DD_\epsilon} \begin{bmatrix}
        h_1 & -h_2 \\ h_2 & h_1
    \end{bmatrix} \begin{bmatrix}
        n_1 \\ n_2
    \end{bmatrix}
    ds = & \int_{\DD_\epsilon} \nabla \cdot \begin{bmatrix}
        h_1 & -h_2 \\ h_2 & h_1
    \end{bmatrix} ds \nonumber \\
    &= \int_{\DD_\epsilon} \begin{bmatrix}
        \partial_x h_1 - \partial_y h_2 \\
        \partial_x h_2 + \partial_y h_1
    \end{bmatrix} ds.
    \label{eq:diveq}
\end{align}
Now, observe that~\eqref{eq:diveq} holds for any piecewise $\mathcal{C}^1$-smooth subdomain in $\DD_\epsilon$. Therefore, if we let $p=h_1+ih_2$ and $R\subset\DD_\epsilon$ an arbitrary rectangle we get
\begin{align*}
    \int_{\partial R}p(z)dz&=\int_{\partial R}(h_1dx-h_2dy)+i\int_{\partial R}(h_2dx+h_1dy)\\
    &=-\int_R(\partial_xh_2+\partial_yh_1)dxdy+i\int_R(\partial_xh_1-\partial_yh_2)dxdy=0,
\end{align*}
which by Morera's theorem then implies that $p(z)$ is analytic in $\DD_\epsilon$, hence $h_1$ and $h_2$ are harmonic in $\DD_\epsilon$.

Since the normal to $\partial\DD_\epsilon$ is given by $\nn=\frac{1}{\epsilon}[x,y]^T$, we then have, by the mean-value property of harmonic functions, the following equality from \eqref{eq:diveq}
$$
\begin{aligned}
&h_1(x_0,y_0)x_0-h_2(x_0,y_0)y_0=\frac{1}{2\pi \epsilon}\int_{\partial\DD_\epsilon} (h_1x-h_2y)ds=0, \\
&h_2(x_0,y_0)x_0+h_1(x_0,y_0)y_0=\frac{1}{2\pi \epsilon}\int_{\partial\DD_\epsilon}(h_2x+h_1y)ds=0 .
\end{aligned}
$$
Therefore, $h_1(x_0,y_0)=h_2(x_0,y_0)=0$. Since $(x_0,y_0)\in \DD$ was arbitrary, this yields that $h_1=h_2\equiv 0$ throughout $\DD$.

Using the result above for $h_1$ and $h_2$ we can infer a relation between the displacement $\uu$ and the potentials $\phi,A$. From this, we proceed to derive the complex problem. Upon using  that $h_1$ and $h_2$ vanish and their respective definitions we can make the substitutions $u_1=v_1, u_2=-v_2$ to get
\begin{align*}
    &\partial_xv_1-\partial_yv_2=-\frac{\rho}{2\mu+\lambda}\phi,\\
    &\partial_xv_2+\partial_yv_1=-\frac{\rho}{\mu}A.
\end{align*}
Multiplying the second equality above with the imaginary unit and summing, we get
\begin{equation*}
    (\partial_xv_1-\partial_yv_2)+i(\partial_xv_2+\partial_yv_1)=2\pdzb v=-\frac{\rho}{2\mu+\lambda}\phi-i\frac{\rho}{\mu}A,
\end{equation*}
which can be rewritten as
\begin{equation*}
    \pdzb v=-\frac{\rho}{2}\bigg(\frac{1}{2\mu+\lambda}\phi+\frac{i}{\mu}A\bigg)=\frac{\psi_1+i\psi_2}{2}=\frac{\psi}{2}.
\end{equation*}
For the normal derivative, straightforward calculations give
\begin{multline*}
    \sigma(\uu)\cdot\nn=\mu(\nabla\uu+\nabla\uu^T)\cdot\nn+\lambda(\nabla\cdot\uu)\Id\cdot\nn \\ =
    \bigg[\begin{pmatrix}
        2\mu \partial_x u_{1} & \mu(\partial_yu_{1}+\partial_x u_{2})\\
        \mu(\partial_yu_{1}+\partial_xu_{2}) & 2\mu \partial_y u_{2}
    \end{pmatrix}
    +\begin{pmatrix}
    \lambda(\partial_x u_{1}+\partial_yu_{2}) & 0\\ 0 & \lambda(\partial_xu_{1}+\partial_yu_{2})
    \end{pmatrix}\bigg]\cdot\nn\\
    =\bigg[\begin{pmatrix}
    2\mu \partial_xu_{1} & 2\mu \partial_yu_{1}-\mu\psi_2 \\
    2\mu \partial_xu_{2}+\mu\psi_2 & 2\mu \partial_yu_{2}
    \end{pmatrix}+\begin{pmatrix}
    \lambda\psi_1 & 0\\
    0 & \lambda\psi_1
    \end{pmatrix}\bigg]\cdot\nn\\
    =2\mu\nabla\uu\cdot\nn+\begin{pmatrix}
        \lambda\psi_1 & -\mu\psi_2 \\ \mu\psi_2 & \lambda\psi_1
    \end{pmatrix}\cdot\nn=2\mu\nabla\pmb{v}^*\cdot\nn+\MM\cdot\nn =\GG ,
\end{multline*}
where $\pmb{v}^*=[v_1,-v_2]^T$ and $\MM$ is introduced such that $\MM_{11}=\MM_{22}=-\rho\nu\phi$ and $\MM_{12}=-\MM_{21}=\rho A$. From the boundary condition on $v$, we find as required
\begin{equation}
\gamma=\gamma_1+i\gamma_2=\frac{1}{2\mu}(g_1-ig_2)+\frac{\rho}{2\mu}\left(\nu\phi-iA\right)(n_1-in_2),
\label{eq:gamma}
\end{equation}
where $n_1,n_2$ denotes the components of $\nn$. In particular, as we consider the unit disk we have $n_1-in_2=\overline{\zeta}$ and the implication from \eqref{thm_D_Ph} to \eqref{thm_D_math} follows. We are now set to prove the reverse implication. Assume that the function $v$ satisfies \eqref{thm_D_math}, i.e.
    \begin{align*}
        \pdzb v&=\frac{\psi}{2}\text{ in }\mathbb{D}\nonumber\\
        \partial_{\nu} v&=\gamma\text{ on }\partial\mathbb{D}\label{thm_D_math}\\ \nonumber
        v(0)&=0,
    \end{align*}
where $\gamma$ and $\psi$ are defined above. To complete the proof, we shall verify that the field $\uu=[v_1,-v_2]^T$ satisfies Navier's equation~\eqref{eq:Navier}. Recalling that the Wirtinger derivative can be expanded as
$$\pdzb v=\frac{1}{2}(\partial_x+i\partial_y)(v_1+iv_2)=\frac{1}{2}(\partial_xv_1-\partial_yv_2)+\frac{i}{2}(\partial_xv_2+\partial_yv_1)=\frac{\psi_1+i\psi_2}{2}.$$
 and equating real and imaginary parts in the expression for $\pdzb v$, we get that $\nabla\cdot\uu=\psi_1$ and $-(\nabla\times\uu)=\psi_2\zz$. Substituting these equalities into Navier's equation and using the identity $\nabla^2\uu=\nabla(\nabla\cdot\uu)-\nabla\times(\nabla\times\uu)$, we get the left hand side
$$(2\mu+\lambda)\nabla\psi_1+\mu\nabla\times(\psi_2\zz)+\rho F.$$
Upon substituting $\psi_1=-\frac{\rho}{2\mu+\lambda}\phi$ and $\psi_2=-\frac{\rho}{\mu}A$ into the above equation and utilizing that $F=\nabla\phi+\nabla\times(A\zz)$ we note that the term vanishes, and hence the equation is satisfied. It now remains to show that the boundary data is correct. This can be seen from a similar calculation. The condition on the normal derivative can be reformulated as
$$\partial_\nu v=(x\partial_x+y\partial_y)(v_1+iv_2)=\nn\cdot\nabla(v_1+iv_2)=\gamma_1+i\gamma_2.$$
Hence the components of the vector field $\uu$ satisfies $\nabla u_1\cdot\nn=\gamma_1$ and $\nabla u_2\cdot\nn=-\gamma_2$. Using our previous relation between $\sigma(\uu)\cdot\nn$ and the gradient of $\uu$ we see that

 \begin{equation}
     \sigma(\uu)\cdot\nn=2\mu\nabla\uu\cdot\nn+\begin{pmatrix}
         \lambda\psi_1 && -\mu\psi_2\\ \mu\psi_2 && \lambda\psi_1
     \end{pmatrix}\cdot\nn=\begin{pmatrix}
         2\mu\gamma_1+\lambda\psi_1x-\mu\psi_2y \\
         -2\mu\gamma_2+\mu\psi_2x+\lambda\psi_1y
     \end{pmatrix},
     \label{eq:bdrydata}
 \end{equation}
 where we have used that the normal on $\partial\DD$ is given explicitly by $\nn=[x,y]^T$. Recalling the definition of $\gamma$ we see that the real and imaginary components can be written as
 \begin{align*}
     \gamma_1&=\frac{g_1}{2\mu}+\frac{\rho}{2\mu}(x\nu\phi-yA),\\
     \gamma_2&=-\frac{g_2}{2\mu}-\frac{\rho}{2\mu}(y\nu\phi+xA).
 \end{align*}
 Substituting the above expressions into \eqref{eq:bdrydata} and using that the Poisson ratio is given by $\nu=\frac{\lambda}{2\mu+\lambda}$ we obtain that $\sigma(\uu)\cdot\nn=[g_1,g_2]^T$ as desired, proving the implication from \eqref{thm_D_math} to \eqref{thm_D_Ph}.
\end{proof}
\begin{remark}
    The construction of the function $\gamma$ in \eqref{eq:gamma} is not inherent for the unit disk, but can be applied for general domains where the normal $\nn$ is known. We shall use this when considering general transformed domains.
    \label{remark:normal}
\end{remark}

We will employ biholomorphic mappings to extend Theorem~\ref{theorem:EquivalenceSimplyConnected} to the setting of simply connected domains in Theorem~\ref{theorem:SimplyConnectedMain}. A crucial step in achieving this generalization is provided by Lemma~\ref{lemma:transformedsimplyconnected}.

\begin{lemma}\label{lemma:transformedsimplyconnected}
Let $\Omega$ be a simply connected domain with $\partial\Omega$ being a smooth Jordan curve, $\varphi$ a biholomorphic mapping from a neighborhood of $\overline{\DD}$ to a neighborhood of $\overline{\Omega}$, such that $\varphi(\DD)=\Omega$ and $\varphi(\partial \DD)=\partial\Omega$, $\psi,\gamma\in\mathcal{C}^2(\overline{\Omega};\CC)$.
Moreover, assume that there is a function $v\in \mathcal{C}^2(\overline{\DD};\CC)$ which satisfies
\begin{align*}
    \pdzb v &= \frac{1}{2}\big(\psi\circ\varphi\big)\overline{\partial_z\varphi}\quad\text{ in }\DD,\\
    \partial_\nu v &= |\varphi_z|(\gamma\circ\varphi)\text{ on }\partial\DD.
\end{align*}
Then $u=v\circ\varphi^{-1}\in\mathcal{C}^2(\overline{\Omega};\CC)$ solves
\begin{align*}
    \partial_{\bar{w}}u &= \frac{\psi}{2}\text{ in }\Omega,\\
    \partial_\nu u &= \gamma\text{ on }\partial\Omega.
\end{align*}
\end{lemma}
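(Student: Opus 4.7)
The plan is to simply set $u = v \circ \varphi^{-1}$ and verify the two required conditions by direct computation, using the holomorphicity of $\varphi$ and $\varphi^{-1}$ to simplify the Wirtinger chain rule, and the conformality of $\varphi$ on $\partial\DD$ to handle the normal derivative. Since $\varphi$ is biholomorphic on a neighborhood of $\overline{\DD}$, the derivative $\varphi'$ is non-vanishing there, so $\varphi^{-1}$ is smooth (in fact real-analytic) on a neighborhood of $\overline{\Omega}$, and the regularity $u\in\mathcal{C}^2(\overline{\Omega};\CC)$ is immediate from $v\in\mathcal{C}^2(\overline{\DD};\CC)$.

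For the interior equation, I would apply the Wirtinger chain rule to $u(w)=v(\varphi^{-1}(w))$. Because $\varphi^{-1}$ is holomorphic, $\partial_{\bar w}\varphi^{-1}=0$ and $\partial_{\bar w}\overline{\varphi^{-1}}=\overline{\partial_w\varphi^{-1}}$, so
$$\partial_{\bar w}u(w) = \partial_{\bar z}v\bigl(\varphi^{-1}(w)\bigr)\,\overline{\partial_w\varphi^{-1}(w)} = \frac{\partial_{\bar z}v(z)}{\overline{\varphi'(z)}}, \qquad z=\varphi^{-1}(w).$$
Substituting the hypothesis $\partial_{\bar z}v=\tfrac12(\psi\circ\varphi)\,\overline{\partial_z\varphi}$ makes the factors $\overline{\varphi'(z)}$ cancel, leaving $\partial_{\bar w}u=\psi/2$ on $\Omega$.

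For the boundary condition, I would use that $\varphi$ is conformal at each boundary point $\zeta\in\partial\DD$, so its differential acts as scaling by $|\varphi'(\zeta)|$ composed with a rotation. Since $\varphi$ is orientation-preserving and sends $\DD$ to $\Omega$, it sends the outward unit normal $\pmb{\hat n}_\DD(\zeta)$ to $|\varphi'(\zeta)|\pmb{\hat n}_\Omega(\varphi(\zeta))$. Writing $v=u\circ\varphi$ and applying the real chain rule gives
$$\partial_\nu v(\zeta) = \nabla u(\varphi(\zeta))\cdot d\varphi(\zeta)\,\pmb{\hat n}_\DD(\zeta) = |\varphi'(\zeta)|\,\partial_\nu u(\varphi(\zeta)).$$
Combining this with the assumption $\partial_\nu v=|\varphi_z|(\gamma\circ\varphi)$ on $\partial\DD$ and dividing by $|\varphi'(\zeta)|\neq 0$ yields $\partial_\nu u=\gamma$ on $\partial\Omega$.

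The main obstacle I expect is the boundary step: one has to be careful that $|\varphi'|$ is really the correct scalar relating $\partial_\nu v$ and $\partial_\nu u$, i.e.\ that conformality sends outward normals to \emph{positive} multiples of outward normals (and not to tangential or inward-pointing vectors). Once this geometric fact is cleanly recorded, together with the holomorphic chain-rule identity $\partial_w\varphi^{-1}=1/\varphi'$, both conditions in the conclusion fall out by a one-line substitution of the hypotheses on $v$.
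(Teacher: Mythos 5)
Your proposal is correct and follows the same overall strategy as the paper: define $u=v\circ\varphi^{-1}$, verify the interior equation by the Wirtinger chain rule (your computation is identical to the paper's, with the factor $\overline{\varphi'}$ cancelling against $\overline{\partial_w\varphi^{-1}}$), and then transfer the Neumann condition. The only real difference is in the boundary step. The paper pulls back the defining function of the unit circle, setting $\beta(w,\bar w)=\varphi^{-1}(w)\overline{\varphi^{-1}(w)}-1$, writes $\partial_\nu u=\frac{\beta_{\bar w}}{|\beta_{\bar w}|}\partial_w u+\frac{\beta_w}{|\beta_w|}\partial_{\bar w}u$, and grinds through the identities $\beta_w=\varphi^{-1}_w\overline{\varphi^{-1}}$, $\beta_{\bar w}=\overline{\varphi^{-1}_w}\varphi^{-1}$, $|\beta_w|=|\beta_{\bar w}|=|\varphi^{-1}_w|$. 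You instead invoke the geometric fact that the differential of a conformal, orientation-preserving map taking $\DD$ to $\Omega$ sends the outward unit normal $\zeta$ to $\varphi'(\zeta)\zeta=|\varphi'(\zeta)|\,\pmb{\hat n}_\Omega(\varphi(\zeta))$, so that $\partial_\nu v=|\varphi'|\,(\partial_\nu u)\circ\varphi$ follows from the real chain rule in one line. Both arguments establish the same scaling identity; yours is shorter and more transparent, while the paper's defining-function computation has the minor advantage of staying entirely within the Wirtinger-calculus framework used elsewhere (and generalizes mechanically to the annulus case). You correctly identify the one point needing care --- that the image of the outward normal is a \emph{positive} multiple of the outward normal of $\Omega$ --- and this does hold since $\varphi$ preserves orientation and maps the interior to the interior; spelling that out (tangents map to tangents, and the $-90^\circ$ rotation relating tangent and outward normal is preserved) would make the argument complete.
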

\begin{proof}
We begin by defining $u=v\circ\varphi^{-1}$. From this definition, it follows that
$$\partial_{\wb}u=(v_{\zb}\circ\varphi^{-1})\overline{\varphi_w^{-1}(w)}=\frac{1}{2}\psi(w)(\overline{\varphi_z}\circ\varphi^{-1})\overline{\varphi_w^{-1}}=\frac{1}{2}\psi(w),$$
as desired. Next, let $\beta(w,\wb)=\varphi^{-1}(w)\overline{\varphi^{-1}(w)}-1$ be the function defining the boundary of the unit disk. Then, the outward unit normal on the boundary of $\Omega$ is given by
\begin{equation}\label{lem:SCD_OUN}
\partial_\nu u=\frac{\beta_{\wb}}{|\beta_{\wb}|}\partial_wu+\frac{\beta_w}{|\beta_w|}\partial_{\wb}u\quad \text{ on }\partial\Omega.
\end{equation}
By substituting $w=\varphi(z)$ in~\eqref{lem:SCD_OUN} we arrive at
$$\partial_\nu v=z\partial_zv+\zb\partial_{\zb}v=|\varphi_z|(\gamma\circ\varphi)\quad\text{ on }\partial\DD ,$$
and by employing the chain rule it then follows
\begin{equation}
    \begin{split}
        \partial_\nu u&=\frac{\beta_{\wb}}{|\beta_{\wb}|}(v_z\circ\varphi^{-1})\varphi_w^{-1}+\frac{\beta_w}{|\beta_w|}\frac{\psi}{2}\\
        &=\frac{\beta_{\wb}}{|\beta_{\wb}|}\bigg(\frac{|\varphi_z|\circ\varphi^{-1}}{\varphi^{-1}}\gamma-\frac{\overline{\varphi^{-1}}}{\varphi^{-1}}(\overline{\varphi_z\circ\varphi^{-1}})\frac{\psi}{2}\bigg)\varphi^{-1}_w+\frac{\beta_w}{|\beta_w|}\frac{\psi}{2}\\
        &=\frac{\beta_{\wb}}{|\beta_{\wb}|}\frac{\varphi^{-1}_w}{|\varphi^{-1}_w|\varphi^{-1}}\gamma-\frac{\beta_{\wb}}{|\beta_{\wb}|}\frac{\overline{\varphi^{-1}}\varphi^{-1}_w}{\overline{\varphi^{-1}_w}\varphi^{-1}}\frac{\psi}{2}+\frac{\beta_w}{|\beta_w|}\frac{\psi}{2}=\gamma(w).
   \end{split}
\end{equation}
Here, it was used that $\beta_w=\varphi^{-1}_w\overline{\varphi^{-1}}$, $\beta_{\wb}=\overline{\varphi^{-1}_w}\varphi^{-1}$ and $|\beta_w|=|\beta_{\wb}|=|\varphi^{-1}_w|$. Thus, the proof is completed.
\end{proof}

 With the help of Theorem~\ref{theorem:EquivalenceSimplyConnected} and Lemma~\ref{lemma:transformedsimplyconnected}, we shall now finalize the proof of the main result of this section.

\begin{theorem}\label{theorem:SimplyConnectedMain}
Let $\Omega$ be a simply connected domain with $\partial\Omega$ being a smooth Jordan curve, $\varphi$ a biholomorphic mapping from a neighborhood of $\overline{\DD}$ to a neighborhood of $\overline{\Omega}$, such that $\varphi(\DD)=\Omega$, $\varphi(\partial \DD)=\partial\Omega$, $\varphi(0)=0$, and with associated Lamé parameters $\mu,\lambda$. Furthermore, let $\phi$, $\pmb{A}=A\zz$ be potentials, where $\phi,A$, are $\mathcal{C}^2$ in a neighborhood of $\Omega$ and $\pmb{F}=~\nabla\phi+\nabla\times\pmb{A}$ denote the associated specific loading force. Finally, let the boundary traction $\GG=[g_1, g_2]^T$ be $\mathcal{C}^1$ in a neighborhood of $\partial\Omega$. Then, there exists a unique solution to the plane elasticity problem
\begin{align*}
    -\nabla\cdot\sigma(\uu)&=\ff\text{ in }\Omega\\
\sigma(\uu)\cdot\nn&=\GG\text{ on }\partial\Omega\\
    \uu(0)&=\pmb{0}.
\end{align*}
Moreover, the solution is given by $\uu = [u_1, u_2]^T$, where $u_1$ and $u_2$ denote the positive real and negative imaginary parts of $v\circ\varphi^{-1}$, respectively, and $v$ is given by

\begin{align*}
    v(z)=-\frac{1}{2\pi i}\int_{|\zeta|=1}\bigg(|\varphi_\zeta(\zeta)|(\gamma\circ\varphi(\zeta)&-\frac{\overline{\zeta}}{2}(\psi\circ\varphi(\zeta))\overline{\varphi_\zeta(\zeta)}\bigg)\log(1-z\overline{\zeta})\frac{d\zeta}{\zeta}\\
    &-\frac{z}{\pi}\int_{|\zeta|<1}\frac{\overline{\zeta}(\psi\circ\varphi(\zeta))}{2\zeta(\zeta-z)}d\xi d\eta
\end{align*}
with $\psi=-\rho(\frac{1}{2\mu+\lambda}\phi+\frac{i}{\mu}A)$,  $\gamma=\frac{1}{2\mu}(g_1-ig_2)+\frac{\rho}{2\mu}\left(\nu\phi-iA\right)\frac{\beta_\zeta}{|\beta_{\overline{\zeta}}|}$, where $\beta(\zeta,\overline{\zeta})=\varphi^{-1}(\zeta)\overline{\varphi^{-1}_\zeta(\zeta)}-1$ parametrizes the unit disk and $\nu=\frac{\lambda}{\lambda+2\mu}$ is the Poisson ratio.
\end{theorem}

\begin{proof} We start by considering the following problem in the disk $\DD$
\begin{align*}
    \pdzb v &= \frac{1}{2}\big(\psi\circ\varphi\big)\overline{\partial_z\varphi}\quad\text{ in }\DD,\\
    \partial_\nu v &= |\varphi_z|(\gamma\circ\varphi)\qquad\text{ on }\partial\DD.
\end{align*}
To obtain a solution we need to verify that the conditions for solvability stated in Theorem~\ref{thm:Begehr} are fulfilled. This is shown by a straightforward calculation
    \begin{align}
        I_1&=\frac{1}{2\pi i}\int_{|\zeta|=1}(\gamma\circ\varphi(\zeta))|\varphi_\zeta(\zeta)|\frac{d\zeta}{(1-\overline{z}\zeta)\zeta}+\frac{1}{4\pi i}\int_{|\zeta|=1}(\psi\circ\varphi(\zeta))\overline{\varphi_\zeta}\frac{d\overline{\zeta}}{1-\overline{z}\zeta}\\
        &=\frac{1}{2\pi i}\int_{\vert\zeta\vert=1}(\gamma\circ\varphi(\zeta))\left|\varphi_\zeta(\zeta)\right|\frac{d\zeta}{(1-\zb\zeta)\zeta}-\frac{1}{4\pi i}\int_{|\zeta|=1}\overline{\zeta}\big(\psi\circ\varphi(\zeta)\big)\overline{\varphi_\zeta}\frac{d\zeta}{\zeta(1-\zb\zeta)}\nonumber\\
        &=\frac{1}{2\pi i}\int_{|\zeta|=1}\left((\gamma\circ\varphi(\zeta))|\varphi_\zeta(\zeta)|-\frac{\overline{\zeta}}{2}(\psi\circ\varphi(\zeta))\overline{\varphi_\zeta}\right)\frac{d\zeta}{\zeta(1-\zb\zeta)}\nonumber\\
       & =\frac{1}{2\pi i}\int_{|\zeta|=1}\left(\zeta v_\zeta+\overline{\zeta} v_{\overline{\zeta}}-\frac{\overline{\zeta}}{2}(\psi\circ\varphi(\zeta))\overline{\varphi_\zeta}\right)\frac{d\zeta}{\zeta(1-\zb\zeta)}\nonumber \\ & =\frac{1}{2\pi i}\int_{|\zeta|=1}\frac{v_\zeta}{1-\zb\zeta}d\zeta ,
       \label{eq:conditionproof1}
    \end{align}
and analogously
    \begin{align}
        I_2 &= \frac{1}{2\pi}\int_{|\zeta|<1}\frac{\zb(\psi\circ\varphi(\zeta))\overline{\varphi_\zeta}}{(1-\zb\zeta)^2}d\xi d\eta\nonumber \\ & =\frac{1}{\pi}\int_{|\zeta|<1}\frac{\zb v_{\overline{\zeta}}}{(1-\zb\zeta)^2}d\xi d\eta
        =\frac{1}{2\pi i}\int_{|\zeta|=1}\frac{\zb v}{(1-\zb\zeta)^2}d\zeta, \label{eq:conditionproof2}
    \end{align}
where the last equality follows from the complex version of the Green theorem \cite{Begehr}, which is applicable from the regularity assumptions in the statement of this theorem. Set $I_3=I_1+I_2$. Then by \eqref{eq:conditionproof1} and \eqref{eq:conditionproof2} we get
    \begin{align*}
        I_3&=\frac{1}{2\pi i}\int_{|\zeta|=1}\bigg(\frac{v_\zeta}{1-\zb\zeta}+\frac{\zb v}{(1-\zb\zeta)^2}\bigg)d\zeta=\frac{1}{2\pi i}\int_{|\zeta|=1}\partial_\zeta\bigg(\frac{v}{1-\zb\zeta}\bigg)d\zeta=0.
    \end{align*}
It then follows from Lemma~\ref{lemma:transformedsimplyconnected} that the function $u=v\circ\varphi^{-1}(z)$ satisfies
\begin{align*}
    \partial_{\bar{w}}u &= \frac{\psi}{2}\text{ in }\Omega,\\
    \partial_\nu u &= \gamma\text{ on }\partial\Omega.
\end{align*}
 To finish the proof we shall write $\gamma$ in terms of the explicit components of the normal $\nn$. Introducing the function $\beta(\zeta,\overline{\zeta})=\varphi^{-1}(\zeta)\overline{\varphi^{-1}(\zeta)}-1$ we recall that the components of the unit normal are given by $\hat{n}_1=\operatorname{Re}\left(\frac{\beta_{\overline{\zeta}}}{|\beta_{\overline{\zeta}}|}\right)$ and $\hat{n}_2=\operatorname{Im}\left(\frac{\beta_{\overline{\zeta}}}{|\beta_{\overline{\zeta}}|}\right)$, resulting in
$$\gamma=\frac{1}{2\mu}(g_1-ig_2)+\frac{\rho}{2\mu}(\nu\phi-iA)\frac{\beta_\zeta}{|\beta_{\overline{\zeta}}|},$$
where we used that $\beta_\zeta=\overline{\beta_{\overline{\zeta}}}$. The theorem then follows analogously to Theorem~\ref{theorem:EquivalenceSimplyConnected} by showing that the physical field $\uu=[v_1,-v_2]^T$ satisfies Navier's equation with the correct boundary data.

\end{proof}

\section{The case of annular domains}
Let us now consider the case where our physical domain $\Omega$ is conformally equivalent to the annulus $\A=\{z\in\CC: r<|z|<1 \}$, rather than the unit disk as discussed in the previous section. This transition to an annular domain implies that the domain is no longer simply connected, and consequently, the Riemann mapping theorem does not apply. Although there is no direct analogue of the Riemann mapping theorem for annular domains, a result due to Vaitekhovich \cite{Vaitekhovich} ensures the existence of solutions to the Neumann problem for the inhomogeneous Cauchy–Riemann equations on $\A$. This corresponds precisely to Theorem~\ref{thm:Begehr} from the previous section.

Before we state the following theorem, it is worth recalling that the notation $W^{\alpha+1,\alpha}_{\zb}$ represents the set of Hölder continuous functions defined on the annulus that possess a continuous weak first-order $\zb$-derivative.

\begin{theorem}\label{theorem:annulussolution}
Let $\psi\in\mathcal{C}^{\alpha+1}(\A;\CC), 0<\alpha<1$, $\gamma\in\mathcal{C}(\partial\A;\CC)$, and assume that $\kappa=1$ on $|z|=1$, $\kappa=-1$ on $|z|=r$ and  $c\in\CC$ and $z_0\in \A$. The Neumann problem for the inhomogeneous Cauchy-Riemann equations in the annulus $\A$
\begin{align*}
    &\pdzb w=\psi\text{ in }\A\\
    &\kappa|z|\partial_{\nu_z}w=\gamma\text{ on }\partial\A\\
    & w(z_0)=c
\end{align*}
 is solvable if and only if for $z\in \A$
\begin{align*}
   & \frac{1}{2\pi i}\int_{\partial \A}\gamma(\zeta)\frac{d\zeta}{1-\zb\zeta}-\frac{1}{2\pi i}\int_{\partial \A}\psi(\zeta)\frac{\overline{\zeta}d\zeta}{1-\zb\zeta}+\frac{1}{\pi}\int_{\A}\psi(\zeta)\frac{d\xi d\eta}{(1-\zb\zeta)^2}=0\\
   & \frac{1}{2\pi i}\int_{\partial \A}\gamma(\zeta)\frac{d\zeta}{r^2-\zb\zeta}-\frac{1}{2\pi i}\int_{\partial \A}\psi(\zeta)\frac{\overline{\zeta}d\zeta}{r^2-\zb\zeta}+\frac{r^2}{\pi}\int_{\A}\psi(\zeta)\frac{d\xi d\eta}{(r^2-\zb\zeta)^2}=0.
\end{align*}
Moreover, if $\gamma$ and $\psi$ satisfy the condition:
\begin{equation*}
    \frac{1}{2\pi i}\int_{\partial \A}\left(\gamma(\zeta)-\overline{\zeta}\psi(\zeta)\right)\frac{d\zeta}{\zeta}=0 ,
\end{equation*}
then the solution is unique, single-valued function given by
\begin{align*}
    w(z)  =&c-\frac{1}{2\pi i}\int_{|\zeta|=1}\left(\gamma(\zeta)-\overline{\zeta} \psi(\zeta)\right)\log\left(\frac{1-z\overline{\zeta}}{1-z_0\overline{\zeta}}\right)\frac{d\zeta}{\zeta}\\
    & +\frac{1}{2\pi i}\int_{|\zeta|=r}\left(\gamma(\zeta)-\overline{\zeta}\psi(\zeta)\right)\log\left(\frac{z\overline{\zeta}-r^2}{z_0\overline{\zeta}-r^2}\right)\frac{d\zeta}{\zeta}\\
    & -\frac{1}{\pi}\int_{\A} \psi(\zeta)\frac{z-z_0}{(\zeta-z_0)(\zeta-z)}d\xi d\eta.
\end{align*}
\end{theorem}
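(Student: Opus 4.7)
The plan is to mirror the argument used for the disk in Theorem~\ref{thm:Begehr}, with a two-term logarithmic kernel for the annulus playing the role that $\log(1-z\overline{\zeta})$ alone plays there. The outer boundary contributes $\log(1 - z\overline{\zeta})$ (reflection via $\overline{\zeta} = 1/\zeta$ on $|\zeta|=1$) and the inner boundary contributes $\log(z\overline{\zeta} - r^2)$ (reflection via $\overline{\zeta} = r^2/\zeta$ on $|\zeta|=r$). The normalization $w(z_0)=c$ is baked in by subtracting the value of each kernel at $z = z_0$, which is why the logarithms appear as ratios in the stated formula.

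First I would verify by direct differentiation that the candidate formula satisfies $\pdzb w = \psi$ and $w(z_0) = c$. Both log-terms are holomorphic in $z$ and hence annihilated by $\pdzb$. For the area integral I would use the partial fraction identity
\[
\frac{z - z_0}{(\zeta - z_0)(\zeta - z)} \;=\; \frac{1}{\zeta - z} \;-\; \frac{1}{\zeta - z_0},
\]
in which the second piece is holomorphic in $z$ and the first is handled by the Cauchy--Pompeiu identity $-\pi^{-1}\pdzb\!\int_\A \psi(\zeta)(\zeta - z)^{-1}\, d\xi d\eta = \psi(z)$. The normalization is immediate: at $z = z_0$ both log-arguments equal $1$ and the area integrand vanishes.

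Next I would impose the boundary condition, using the rewriting $\kappa |z|\partial_{\nu_z} = z\partial_z + \zb\,\partial_{\zb}$, which is valid on both circles once the sign flip of the outward normal on $|z|=r$ is absorbed into $\kappa$. Applying this operator to the candidate $w$ and letting $z$ tend to the outer boundary, one reduces each log-kernel to a Cauchy-type expression via $\overline{\zeta} = 1/\zeta$ on $|\zeta|=1$; the demand that the boundary limit equal $\gamma(z)$ then produces the first solvability obstruction stated in the theorem, and the analogous computation on $|z|=r$ using $\overline{\zeta} = r^2/\zeta$ produces the second. The remaining condition $\int_{\partial\A}(\gamma-\overline{\zeta}\psi)\,d\zeta/\zeta = 0$ is a single-valuedness requirement: the two log-kernels a priori pick up period $2\pi i$ around the hole, and the combined contributions from the two boundary circles produce exactly the residue at $\zeta = 0$ which must cancel for $w$ to be globally well-defined on the non-simply-connected domain.

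The main obstacle I expect is controlling the boundary limits of the regularized integrals and organizing the reflection identities on the two circles so that the three compatibility conditions emerge cleanly; in the disk setting of Theorem~\ref{thm:Begehr} this bookkeeping is already delicate, and here one must pair contributions from both boundary components without introducing spurious multi-valued terms. The regularity $\psi \in \mathcal{C}^{\alpha+1}(\A;\CC)$ together with the complex Green theorem, as already invoked in the proof of Theorem~\ref{theorem:SimplyConnectedMain}, should suffice to justify all the required interchanges of differentiation and integration.
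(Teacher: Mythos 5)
The paper does not actually prove this statement: its ``proof'' is the single line ``See [Vaitekhovich, Theorem 4.4]'', so the result is imported wholesale and there is no in-paper argument to compare yours against. That said, your sketch follows the same strategy as the cited source (and as Begehr's disk case, Theorem~\ref{thm:Begehr}): verify $\pdzb w=\psi$ and $w(z_0)=c$ directly from the candidate formula via the partial-fraction identity and the Pompeiu transform, rewrite the boundary operator as $z\partial_z+\zb\partial_{\zb}$ on both circles, and extract the solvability conditions from the reflections $\overline{\zeta}=1/\zeta$ and $\overline{\zeta}=r^2/\zeta$. Your outline is correct as far as it goes.

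Two points are thinner than they should be. First, you only really sketch sufficiency (the candidate formula works when the conditions hold); necessity --- that any solution forces the two orthogonality conditions --- requires applying the Cauchy--Pompeiu representation to $w_z$ on the annulus and identifying the anti-holomorphic obstruction terms, which is the substantive computation in Vaitekhovich's proof and is not reproduced by ``the demand that the boundary limit equal $\gamma$ produces the obstruction.'' Second, your single-valuedness discussion is slightly off in detail: only the inner kernel $\log(z\overline{\zeta}-r^2)$ acquires a period as $z$ winds around the hole (for $|\zeta|=1$ the quantity $1-z\overline{\zeta}$ stays in a disk not containing the origin), yet the stated condition integrates over all of $\partial\A$; reconciling these requires tracking the orientation convention on $\partial\A$ and the contribution of the homogeneous (holomorphic) part of $w$, not just the residue at $\zeta=0$. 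Neither issue is fatal, but both would need to be worked out to turn the sketch into a proof --- which is presumably why the authors chose to cite rather than prove.
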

\begin{proof}
See~\cite[Theorem 4.4]{Vaitekhovich}.
\end{proof}

\begin{theorem}\label{theorem:EquivalenceAnnulus}
    Let $\A$ denote the unit annulus with inner radius $r$ and $\mu,\lambda$ its associated Lamé parameters. Furthermore, let $\phi$, $\pmb{A}=A\zz$ be potentials, where $\phi,A$, are $\mathcal{C}^2$ in a neighborhood of $\A$ and $\pmb{F}=~\nabla\phi+\nabla\times\pmb{A}$ denote the associated specific loading force. Finally, let the boundary traction $\GG=[g_1, g_2]^T$ be $\mathcal{C}^1$ in a neighborhood of $\partial\A$. Then, the existence of the displacement field $\uu = [u_1, u_2]^T$ that solves
    \begin{align*}
        -&\nabla\cdot\sigma(\uu)=\rho\pmb{F}\text{ in }\A\\
        &\sigma(\uu)\cdot\nn=\GG\text{ on }\partial \A\\
        & \uu(z_0)=\pmb{0}\, ,
    \end{align*}
    where $p_0\in \A$, is equivalent to that the function $v=u_1-iu_2$ solves
    \begin{align*}
        \pdzb v&=\frac{\psi}{2}\text{ in }\A\\
        \kappa|z|\partial_{\nu} v&=\gamma\text{ on }\partial\A\\
        v(z_0)&=0,
    \end{align*}
    with $\psi=-\frac{\phi}{2\mu+\lambda}-i\frac{A}{\mu}$, $\gamma=\frac{\kappa}{2\mu|\zeta|}(g_1-ig_2)+\frac{\kappa\rho}{2\mu}\frac{\overline{\zeta}}{|\zeta|}\big(\nu\phi-iA\big)$, where $\kappa$ is a constant defined as $\kappa=1$ for $|\zeta|=1$, $\kappa=-1$ for $|\zeta|=r$
    and $\nu=\frac{\lambda}{\lambda+2\mu}$ is the Poisson ratio.
\end{theorem}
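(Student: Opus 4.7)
The plan is to mirror the proof of Theorem~\ref{theorem:EquivalenceSimplyConnected} essentially verbatim, since the argument given there is local in nature and does not exploit the topology of $\DD$ beyond pinning down the reference point. I would outline only the forward implication (divergence-form elasticity problem $\Rightarrow$ Neumann $\pdzb$-problem); the reverse direction is read off by inverting the same calculations.

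First, expanding $\sigma(\uu)=2\mu\varepsilon(\uu)+\lambda(\nabla\cdot\uu)\Id$ in the divergence-form equation and invoking planarity (so that $\uu=\uu(x,y)$ with $\uu\cdot\zz=0$), together with the identity $\nabla^2\uu=\nabla(\nabla\cdot\uu)-\nabla\times(\nabla\times\uu)$, recovers the Navier form
$$(2\mu+\lambda)\nabla(\nabla\cdot\uu)-\mu\nabla\times(\nabla\times\uu)=-\rho\pmb{F}=-\rho\bigl(\nabla\phi+\nabla\times\pmb{A}\bigr).$$
Matching gradient and curl parts yields the pointwise identifications $\rho\phi=-(2\mu+\lambda)\nabla\cdot\uu$ and $\rho\pmb{A}=\mu(\nabla\times\uu)$ inside $\A$. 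Writing these out in components and substituting $w_1=u_1$, $w_2=-u_2$ exactly as in the disk case produces $2\pdzb v=-\tfrac{\rho}{2\mu+\lambda}\phi-i\tfrac{\rho}{\mu}A=\psi$, and the normalization $\uu(p_0)=\pmb{0}$ transfers to $v(z_0)=0$ under the identification $z_0=x_0+iy_0$ of $p_0=(x_0,y_0)$.

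The boundary calculation is genuinely identical to the disk proof: it is pointwise and algebraic, involving only the entries of $\nabla\uu$ and $\nabla\cdot\uu$ contracted against $\nn$. One rewrites
$$\sigma(\uu)\cdot\nn=2\mu\nabla\pmb{w}^*\cdot\nn+\MM\cdot\nn,$$
where $\pmb{w}^*=[w_1,-w_2]^T$ and $\MM$ collects the substituted entries $\pm\rho\nu\phi,\pm\rho A$, sets this equal to $\GG$, and reads off $\gamma=\gamma_1+i\gamma_2$ from the real and imaginary parts. Because this computation is purely local, the fact that $\partial\A$ has two components (with oppositely directed outward normals) contributes no extra work: the very same formula for $\gamma$ holds on both $|z|=1$ and $|z|=r$.

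The step I expect to require the most caution is the identification $\rho\phi\leftrightarrow-(2\mu+\lambda)\nabla\cdot\uu$ and $\rho\pmb{A}\leftrightarrow\mu(\nabla\times\uu)$, because the Helmholtz decomposition on a doubly connected domain is unique only up to harmonic corrections, whereas the disk version could appeal to simple connectedness. The resolution is the same as in the disk proof: since $\phi$ and $\pmb{A}$ are given as $\mathcal{C}^2$ data on a neighborhood of $\overline{\A}$, their traces on both boundary circles are prescribed, and this pins down the decomposition uniquely. Once this point is addressed, the rest of the argument is a direct transcription of Theorem~\ref{theorem:EquivalenceSimplyConnected}.
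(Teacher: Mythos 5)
Your proposal is correct and takes essentially the same approach as the paper, whose proof consists of the single remark that the argument of Theorem~\ref{theorem:EquivalenceSimplyConnected} carries over with Theorem~\ref{theorem:annulussolution} used in place of Theorem~\ref{thm:Begehr}. Your additional caution about the Helmholtz identification on the doubly connected domain, resolved via the prescribed boundary traces of $\phi$ and $\pmb{A}$, is a sensible elaboration of a point the paper leaves implicit.
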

\begin{proof}
The proof follows the arguments of Theorem~\ref{theorem:EquivalenceSimplyConnected}, but we need to use Theorem~\ref{theorem:annulussolution} instead of Theorem~\ref{thm:Begehr}. It should be noted that for these arguments to go through one has to partition the annulus along the horizontal axis and show the arguments separately.
\end{proof}

Similarly, as in the proof of
Theorem~\ref{theorem:SimplyConnectedMain},  one can show that we can extend Theorem~\ref{theorem:EquivalenceAnnulus}. Therefore, we can establish the existence of a unique solution to the Neumann problem for the plane elasticity equations~\eqref{eq:elasticityproblem}  in domains that are conformally equivalent to $\A$. This insight leads to the main result of this section, Theorem~\ref{theorem: SolutionAnnularDomain}.

\begin{theorem}
\label{theorem: SolutionAnnularDomain}
Let $\varphi$ be a biholomorphic mapping from a neighborhood of $\overline{\A}$ into a neighborhood of $\overline{\Omega}$, such that $\varphi(\A)=\Omega$, $\varphi(\partial \A)=\partial\Omega$ and  $\varphi(z_0)=p_0$ where $\partial\Omega$ is a Jordan curve, $z_0\in\A$ and $p_0\in\Omega$. Furthermore, $\Omega$ has associated Lamé parameters $\mu,\lambda$. Furthermore, let $\phi$, $\pmb{A}=A\zz$ be potentials, where $\phi,A$, are $\mathcal{C}^2$ in a neighborhood of $\Omega$ and $\pmb{F}=~\nabla\phi+\nabla\times\pmb{A}$ denote the associated specific loading force. Finally, let the boundary traction  $\GG=[g_1, g_2]^T$  be $\mathcal{C}^1$ in a neighborhood of $\partial\Omega$.  Then, a unique solution exists to the plane elasticity problem
\begin{align*}
    -\nabla\cdot\sigma(\uu)&=\rho\pmb{F}\text{ in }\Omega\\
\sigma(\uu)\cdot\nn&=\GG\text{ on }\partial\Omega\\
    \uu(p_0)&=\pmb{0}.
\end{align*}
Moreover, the solution is given by $\uu=[u_1,u_2]^T$, where $u_1,u_2$ are the positive real and negative imaginary parts respectively of $v\circ\varphi^{-1}$ where $v$ is given by

\begin{align*}
    v(z)  =&\frac{1}{2\pi i}\int_{|\zeta|=1}\left(|\varphi_\zeta|(\gamma\circ\varphi(\zeta))-\frac{\overline{\zeta}}{2}(\psi\circ\varphi(\zeta))\overline{\varphi_\zeta}\right)\log\left(\frac{1-z\overline{\zeta}}{1-z_0\overline{\zeta}}\right)\frac{d\zeta}{\zeta}\\
    & +\frac{1}{2\pi i}\int_{|\zeta|=r}\left(|\varphi_\zeta|(\gamma\circ\varphi(\zeta))-\frac{\overline{\zeta}}{2}(\psi\circ\varphi(\zeta))\overline{\varphi_\zeta}\right)\log\left(\frac{z\overline{\zeta}-r^2}{z_0\overline{\zeta}-r^2}\right)\frac{d\zeta}{\zeta}\\
    & -\frac{1}{\pi}\int_{\A} \frac{(\psi\circ\varphi(\zeta))\overline{\varphi_\zeta}(\zeta)}{2}\frac{z-z_0}{(\zeta-z_0)(\zeta-z)}d\xi d\eta.
\end{align*}
Here, $\psi=-\frac{\phi}{2\mu+\lambda}-i\frac{A}{\mu}$, $\gamma=\frac{\kappa}{2\mu|\zeta|}(g_1-ig_2)+\frac{\kappa\rho}{2\mu|\zeta|}\frac{\beta_\zeta}{|\beta_{\overline{\zeta}}|}\big(\nu\phi-iA\big)$, $\beta(\zeta,\overline{\zeta})=\varphi^{-1}(\zeta)\overline{\varphi^{-1}(\zeta)}-1$, $\kappa=1$ for $|\zeta|=1$, $\kappa=-1$ for $|\zeta|=r$ and  $\nu=\frac{\lambda}{\lambda+2\mu}$ is the Poisson ratio.
\end{theorem}

\section{Applying the new method to classic cases}\label{sec:Appl}

In this section, we demonstrate the efficacy of Theorem~\ref{theorem:SimplyConnectedMain} and~\ref{theorem: SolutionAnnularDomain} by exploring its application to specific traditional cases in plane elasticity. These cases have been carefully selected, representing scenarios where geometric complexities and applied forces present distinct challenges. By providing explicit solutions for the displacement field in cases such as a notch problem in a cardioid domain, a ring domain with a shifted hole, and a gear-like structure, we aim to highlight the advantages and versatility of our method in contrast to traditional approaches.

In all examples, we maintain the same material, Titanium--$\beta$--C, due to its wide industrial use in sectors such as aerospace, automotive, and medical technology, given its high strength-to-weight ratio, excellent corrosion resistance, and biocompatibility. This alloy is characterized by Lamé parameters $\mu=34375$ $\mathrm{MPa}$ and $\lambda=43750$ $\mathrm{MPa}$, and density $\rho = 4820$ $\mathrm{kg\,  m}^{-3}$. We employ a scaling constant $c$ to modulate the applied body force and select the functions $\gamma$ and $\psi$ to enhance the comprehensibility of the solutions. To simplify the presentation of the examples, we assume that the boundary traction $\GG$ is given implicitly rather than explicitly specifying it.

We start in a cardioid domain, often encountered in contexts where structural weakening and maximum load before fracturing are of interest. A practical example includes the construction of micromechanical components in silica, which contain near-atomic notches inherent in their crystalline structure \cite{Busch}. However, the sharp corner introduces problematic behavior of stress fields in the Airy or biharmonic framework, a recognized issue for over half a century \cite{Williams}. While specialized plate-theories and methods have found some success with specific boundary conditions and loads, a common approach is to consider a simplified or non-existent traction force in a biharmonic equation, as solved by Williams in the referenced paper, leading to the so-called Williams expansion \cite{Helsing}. However, this approach can be cumbersome and tends to diverge from the physical problem. Here, we demonstrate an example using our method, leveraging only the force potentials and boundary traction to directly evaluate the displacement field.

Another example covers domains of rings with eccentric or shifted holes. This has been a longstanding topic of interest due to its applicability in engineering. Most authors consider the application of constant forces, either applied pointwise to the outer or inner circle \cite{Gupta1,Gupta2} or diametrically and normally applied pressure \cite{Alaci2,Desai,Solovev}. Unlike these studies, our method applies to continuous forces, not pointwise ones. Here, we illustrate its usage on continuous force distributions in an eccentric ring. The outlined method's simplicity compared to common bipolar coordinates application \cite{Khomasuridze} is another advantage, as it directly utilizes the problem's physical data and an integral representation instead of the power-series method.

\clearpage

\begin{example}\label{Ex:cardioid}
In this example, we consider the physical domain $\Omega$ as a cardioid, a domain with a single inward corner where we expect to observe a stress concentration. Given the anticipated singularity in the solution at this corner, we initially take the domain $\Omega_a$ as the image of $\DD$ under the transformation $\varphi_a(\zeta)=\frac a2\zeta^2+\zeta$ with $0<a<1$. The canonical solution is then given by $v(\zeta)=\lim_{a\rightarrow 1^-}v_a(\zeta)$.

The force is characterized by the potentials:
\begin{align*}
    \phi&=-\frac{(2\mu+\lambda)c}{\rho}(x^3-3xy^2),\\
    \pmb{A}&=-\frac{\mu c}{\rho}(3x^2y-y^3),
\end{align*}
leading to $\psi=c\zeta^3$. The complex traction potential is assumed as:
\[
\gamma_a(w)=\frac {\varphi_a^{-1}(w)}{|\varphi'_a(\varphi_a^{-1}(w))|}=\frac 1a\, \frac {-1+\sqrt{1+2aw}}{\sqrt{|1+2aw|}},
\]
where $\sqrt{\cdot}$ denotes the complex branch of square root satisfying $\sqrt{1}=1$.The solution in the domain $\Omega_a$ is then given by $u_a(\zeta)=v_a(\varphi_a^{-1}(\zeta))$, where
\begin{multline*}
v_a(z)=\frac{c}{2}\bigg( \frac {a}{2}z^2+z\bigg)^3\bigg(\frac {a}{2}\bar z^2+\bar z\bigg)+cz\\
-\frac {c}{16}\bigg(\frac 75a^3z^5+\big(a^4+9a^2\big)z^4+(7a^3+20a)z^3+(18a^2+16)z^2+20az\bigg).
\end{multline*}

As $a \to 1^-$, the physical domain $\Omega$ takes on the shape of a cardioid. In this case, a normal vector cannot be defined at the cardioid's cusp, creating a problem for our plane elasticity problem as laid out in Theorem~\ref{theorem:SimplyConnectedMain}. While our theorem holds true for all other points in a neighborhood of $\Omega$, it breaks down at the cardioid's cusp.

Nevertheless, if one insists on mathematically extending the model to the cardioid's cusp, the optimal method would be to interpret it as follows: In $\Omega_1$, for the data $\gamma_1(w)=\frac {-1+\sqrt{1+2w}}{\sqrt{|1+2w|}}$, the solution is given by by $u(\zeta)=v(-1+\sqrt{1+2\zeta})$, where
\begin{multline*}
v(z)=\lim_{a\rightarrow 1^-}v_a(z)=\frac{c}{2}\left( \frac {1}{2}z^2+z\right)^3\left(\frac {1}{2}\bar z^2+\bar z\right)\\ +cz-
\frac {c}{16}\left(\frac 75z^5+10z^4+27z^3+34z^2+20z\right).
\end{multline*}

From an applied perspective, this problem is less significant, since one can simply choose an $a$ value sufficiently close to 1. In Fig.~\ref{App1}, we selected $a=0.99$.

\begin{figure}[H]
     \centering
     \begin{subfigure}[b]{0.45\textwidth}
         \centering
         \includegraphics[width=\textwidth]{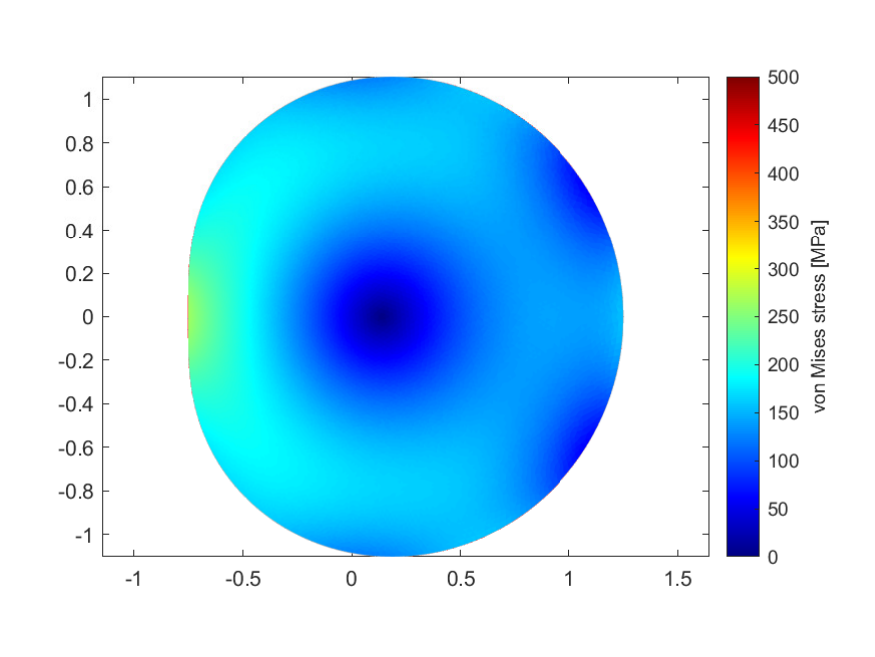}
         \caption{$a=0.5$ and $c=10^{-3}$}
         \label{App1a}
     \end{subfigure}
     \hfill
     \begin{subfigure}[b]{0.45\textwidth}
         \centering
         \includegraphics[width=\textwidth]{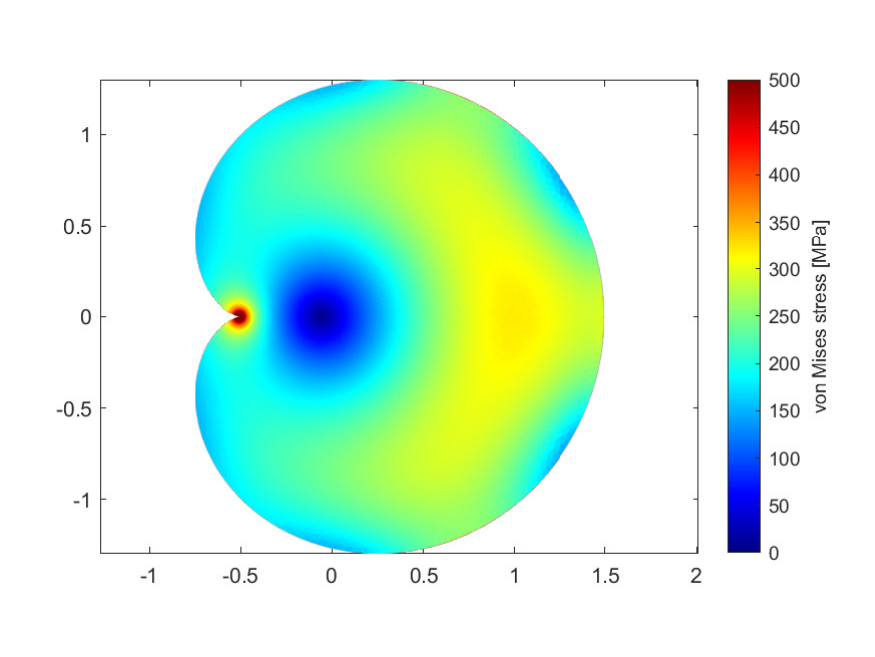}
         \caption{$a=0.99$ and $c=10^{-3}$}
         \label{App1b}
     \end{subfigure}
     \caption{Internal distribution of von Mises stress for different values of $a$}
     \label{App1}
\end{figure}
 \hfill{$\Box$}
\end{example}

\begin{example}
\label{ex: eccentric ring}
In this example, we shall consider an eccentric annulus. Let $\Omega_b$ be the image of the annulus $\mathbb{A}_{\frac{1}{2}}$ under the mapping $\varphi_b=\frac{z-b}{1-bz}$ where $0<b<1$. The new domain $\Omega_b$ is then defined as $\Omega_b=B(0,1)\setminus \overline{B(z_b,r_b)}$, with center $z_b=\frac {3b}{b^2-4}$ and radius $r_b=\frac{2(1-b^2)}{4-b^2}$. Assume the potentials of the force acting on the domain are given by:
\begin{align*}
    \phi(z)&=-\frac{(2\mu+\lambda)c}{\rho}(x^2-y^2)\\
    \pmb{A}&=-\frac{2\mu c}{\rho}xy,
\end{align*}

which implies that $\psi=c\zeta^2$. Let us also consider the case when the complex traction potential is given by
$$\gamma(w)=\frac{c(1-b^2)(w+b)}{|1+bw|^2(1+bw)}+\frac{c|w|^2(w+b)}{1+b\overline{w}}+\frac{cw^2(\overline{w}+b)}{2(1+bw)}.
$$

It can then readily be shown that the conditions for solvability are satisfied and we obtain a canonical solution as $u(\zeta)=v(\varphi_b^{-1}(\zeta))$, where
$$
v(z)=\frac{c}{2}\frac{(z-b)|z-b|^2}{(1-bz)|1-bz|^2}+c(z-b),
$$
which yields $v(b)=0$. Setting $b=0.5$ and $c=10^{-3}$ gives us the internal distribution of von Mises stress as shown in Fig.~\ref{App2}.

\clearpage

\begin{figure}[H]
    \centering
\includegraphics[width=0.75\textwidth]{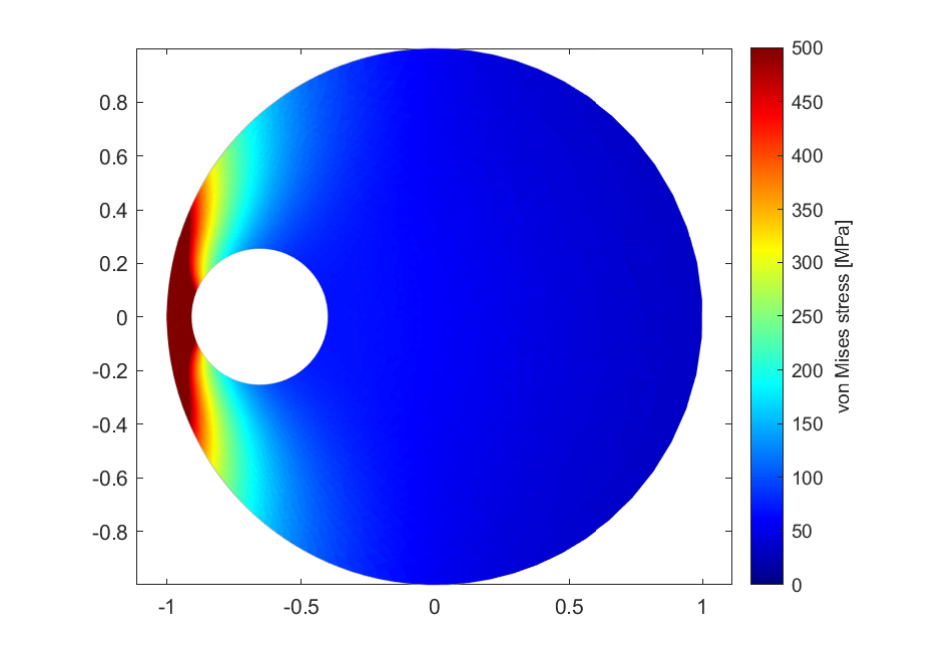}
    \caption{Internal distribution of von Mises stress due to a linear load with scaling constant $c=10^{-3}$.}
    \label{App2}
\end{figure}
 \hfill{$\Box$}
\end{example}

\begin{example}
Consider the gear-like domain $\Omega_a$, defined as the image of an annulus $\A$ with inner radius $r=1/2$ under the map $\varphi_a=\frac{a\zeta^4}{4}+\zeta$ with $0<a<1$. This domain represents an epitrochoid. We assume that the complex boundary traction potential, $\gamma_a(w)$, is given by the formula:
\begin{multline*}
\gamma_a(w)=\\ \frac {2c\varphi_a^{-1}(w)+2c|w|^2\varphi_a^{-1}(w)(a(\varphi_a^{-1}(w))^3+1)+cw^2\overline{\varphi_a^{-1}(w)}(a\overline{(\varphi_a^{-1}(w))}^3+1)}{2|a(\varphi_a^{-1}(w))^3+1|},
\end{multline*}
where $\varphi_a^{-1}(w)$ is defined as:
\[
\varphi_a^{-1}(w)=\frac 12\left(-\sqrt{Y(w)}+\sqrt{-2Y(w)+\frac {8}{a\sqrt{2Y(w)}}}\right),
\]
and $Y(w)$ and $X(w)$ are given by:
\[
Y(w)=X(w)-\frac {4w}{3aX(w)}, \ \ X(w)=a^{-\frac 23}\sqrt[3]{1+\sqrt{1+\frac {64aw^3}{27}}}.
\]
We consider the action of a constant force, with potentials $\phi(z)$ and $\pmb{A}$ defined as:
\begin{align*}
    \phi(z)&=-\frac{(2\mu+\lambda)c}{\rho}(x^2-y^2),\\
    \pmb{A}&=-\frac{2\mu c}{\rho}xy,
\end{align*}
yielding $\psi=c\zeta^2$. The canonical solution is then given by $u_a(\zeta)=v_a(\varphi_a^{-1}(\zeta))$, where
$$
v_a(z)=\frac c2\left|\frac {a}{4}z^4+z\right|^2\left(\frac {a}{4}z^4+z\right)+cz.
$$

Here, we face a situation similar to Example~\ref{Ex:cardioid}. The same reasoning applies, but we forego the details to avoid repetition and suggest revisiting Example~\ref{Ex:cardioid} for understanding.

The resulting internal distribution of von Mises stress with $c=10^{-3}$ and the choices $a=0.5$ and $a=0.99$ are shown in Fig.~\ref{App3}.
\begin{figure}[H]
     \centering
     \begin{subfigure}[b]{0.45\textwidth}
         \centering
         \includegraphics[width=\textwidth]{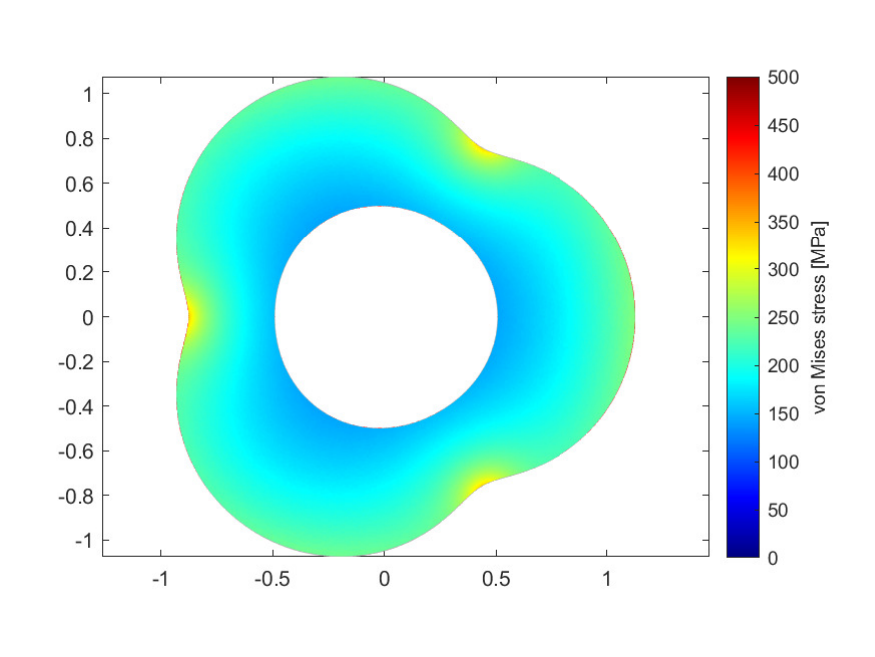}
         \caption{$a=0.5$ and $c=10^{-3}$}
         \label{App3a}
     \end{subfigure}
     \hfill
     \begin{subfigure}[b]{0.45\textwidth}
         \centering
         \includegraphics[width=\textwidth]{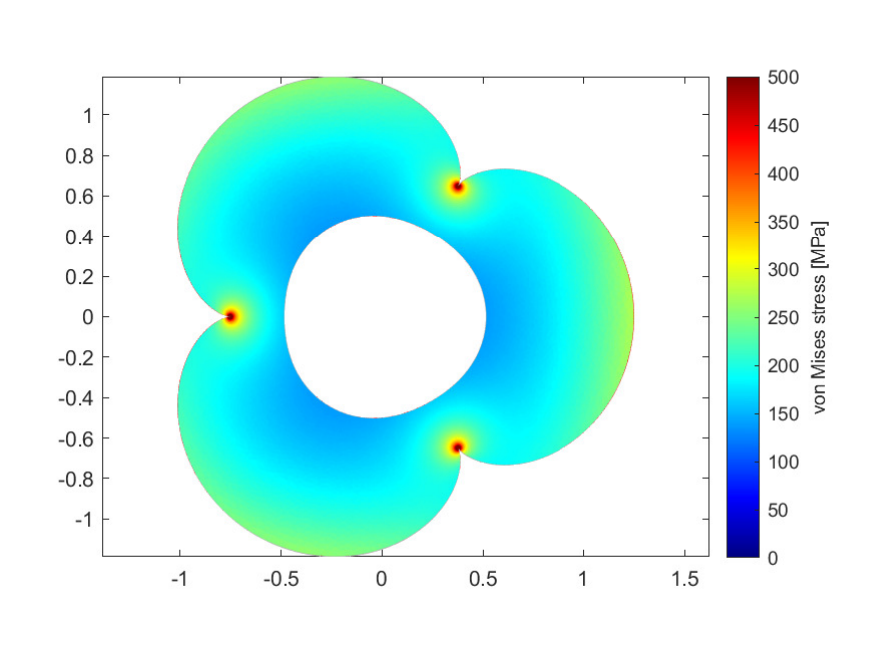}
         \caption{$a=0.99$ and $c=10^{-3}$}
         \label{App3b}
     \end{subfigure}
     \caption{Internal distribution of von Mises stress for different values of $a$}
     \label{App3}
\end{figure}
\hfill{$\Box$}
\end{example}

\section{Conclusion}

In this paper, we presented a methodology that seeks to address challenges in isotropic planar elastostatics based on the Neumann problem for the inhomogeneous Cauchy-Riemann equations.

\medskip

\noindent Features of our method

\begin{itemize}
    \item \emph{Direct Correspondence}: Our method establishes the connection between the plane elasticity problem described in~\eqref{eq:elasticityproblem} and the Neumann problem for the inhomogeneous Cauchy-Riemann equations.

    \item \emph{Flexibility and Scope}: While many techniques are designed for specific force classes or domains, our method aims to provide a broader perspective by setting conditions on a wider range of domains and data smoothness criteria.

    \item \emph{Focus on the Displacement Field}: Instead of prioritizing the stress tensor, which is common in many established methods, our approach attempts to provide an explicit formula for the displacement field.

    \item \emph{Simpler Approach}: By avoiding the need for a rational conformal map, our methodology provides a more general approach. This is further enhanced by minimizing intermediate constructs, which are prevalent in methods such as the real integral equation method and the Somigliana method.

    \item \emph{Range of Applicability}: Our method has been tested on a variety of domains, including cardioid structures and gear-like configurations, showcasing its potential applications.

\end{itemize}

In conclusion, our research introduces a novel approach that builds on existing concepts in isotropic planar elastostatics and offers potential for broader applications.

\end{document}